\numberwithin{equation}{section}
\title{A Noncommutative Nullstellensatz for Perfect Two-Answer Quantum Nonlocal Games}
\author{Tianshi Yu}
\affiliation{
	\institution{Key Lab of Mathematics Mechanization, AMSS }
	\department{University of Chinese Academy of Sciences}
	\city{Beijing, 100190}
	\country{China}
}
\email{yutianshi@amss.ac.cn}
\author{Lihong Zhi}
\affiliation{
	\institution{Key Lab of Mathematics Mechanization, AMSS }
	\department{ University of Chinese Academy of Sciences}
	\city{Beijing, 100190}
	\country{China}
}
\email{lzhi@mmrc.iss.ac.cn}
\keywords{Noncommutative Nullstellensatz, Sum of Squares, GNS construction, Quantum nonlocal games}
\begin{document}

\newtheorem{defi}{Definition}[section]
\newtheorem{thm}{Theorem}[section]
\newtheorem{cor}[thm]{Corollary}
\newtheorem{prop}[thm]{Proposition}
\newtheorem{claim}[thm]{Claim}
\newtheorem{remark}{Remark}
\newtheorem{Lemma}{Lemma}
\newtheorem{Example}{Example}
\newcommand{\bN} { {\mathbb{N}}}   
\newcommand{\bC} { {\mathbb{C}}}  
\newcommand{\bQ} { {\mathbb{Q}}}   
\newcommand{\bZ} { {\mathbb{Z}}}   
\newcommand{\bR} { {\mathbb{R}}}   
\newcommand{\bF} { {\mathbb{F}}}
\newcommand{\bH} { {\mathbb{H}}}
\newcommand{\bK} { {\mathbb{K}}}
\newcommand{\op} { {\mathrm{op}}}
\newcommand{\alg} { {\mathrm{alg}}}
\newcommand{\SOS} { {\mathrm{SOS}}}
\newcommand*\abs[1]{\left\lvert#1\right\rvert} 
\newcommand*\norm[1]{\left\lVert#1\right\rVert} 

\let\bar=\overline
\let\leq=\leqslant
\let\geq=\geqslant

\begin{abstract}
	This paper introduces a noncommutative version of the Nullstellensatz, motivated by the study of quantum nonlocal games. It has been proved that a two-answer nonlocal game with a perfect quantum strategy also admits a perfect classical strategy. 
	We generalize this result to the infinite-dimensional case, showing that a two-answer game with a perfect commuting operator strategy also admits a perfect classical strategy. This result induces a special case of noncommutative Nullstellensatz.
\end{abstract}

\maketitle
\section{Introduction}


Quantum nonlocal games have been a vibrant area of research across mathematics, physics, and computer science in recent decades. They help understand quantum nonlocality, which was famously verified by the violation of Bell inequalities \cite{bell1964einstein}. 
In 1969, Clauser et al. first introduced quantum nonlocal games \cite{clauser1969proposed}. 
A nonlocal game typically involves two or more players and a verifier. The verifier sends questions to the players independently, and each player responds without any communication between them. A predefined scoring function determines whether the players win based on the given questions and their answers. 
The distinction between classical and quantum strategies lies in whether players can share quantum entanglement. For instance, in the CHSH game, the classical strategy limits the winning probability to at most $ \frac{3}{4} $. In contrast, quantum strategies using shared entangled states can achieve a success probability of $ \cos^2(\frac{\pi}{8})\approx 0.85 $.


The mathematical models of quantum nonlocal games are often described using algebraic structures \cite{dykema2019non,watts2021noncommutative,watts20203xor,lupini2020perfect}. 
$*-$algebras, noncommutative Nullstellensatz (see \cite{cimprivc2013noncommutative,cimprivc2014real,cimpric2015real}) and Positivstellensatz (see \cite{H2002, HMS2004}) are used for characterizing the different types of strategies for nonlocal games.  Our previous work also gave an algebraic characterization for perfect strategies of mirror games using the universal game algebra, Nullstellensatz, and sums of squares~\cite{yan2023characterization}.

 Nonlocal games with two answers are games in which the set of possible responses consists of only two options~\cite{watts2021noncommutative} (also called binary games in \cite{cleve2004consequences}).  This paper proposes a noncommutative Nullstellensatz inspired by the perfect commuting operator strategies for two-answer nonlocal games. Specifically, we proved that a two-answer game that admits a perfect commuting operator strategy also has a perfect classical strategy, a generalization of the work \cite[Theorem 3]{cleve2004consequences}. Combined with the algebraic characterization of perfect commuting operator strategy \cite{watts2021noncommutative}, we get a new form of noncommutative Nullstellensatz. Although our problem is motivated by nonlocal games, our proofs are presented in a purely algebraic form, allowing readers unfamiliar with quantum nonlocal games to directly engage with the algebraic versions of the theorems.

\section{Preliminaries}
\subsection{Motivations}
If the readers are familiar with this field, they can skip the content of this subsection.

A quantum nonlocal game $\mathcal{G}$ can be described as a scoring function $ \lambda $ from the finite set $ X\times Y\times A \times B $ to $ \{0,1\} $, where the player Alice has a question set $ X $ and an answer set $ A $, while the player Bob has a question set $ Y $ and an answer set $ B $. In a round of the game, Alice would receive the question $x\in X$ and answer $a\in A$ according to $x$ and her strategy; similarly, Bob would receive the question $y\in Y$ and answer $b\in B$.
The players cannot communicate during the game but can make arrangements before playing it.
The players are considered to win the game when  
$ \lambda(x,y,a,b)=1$, and they lose in all other cases.
    \begin{center}
    \begin{tikzcd}
		& Alice \arrow[rd, "a"'] &                               &           \\
		Verifier \arrow[ru, "x"] \arrow[rd, "y"'] &                       & Verifier \arrow[r, "\lambda"] & {\{0,1\}} \\
		& Bob \arrow[ru, "b"]  &                               &          
	\end{tikzcd}
    $$\lambda(x,y, a,b)=\left\{ \begin{array}{ll}
    1   ~~~~~ & \text{win} \\
    0   ~~~~~ & \text{lose}
    \end{array} \right. $$
    \end{center}
    
A (deterministic) {\it classical strategy} involves two mappings 
$$ u: X\rightarrow A \text{~~and~~} v: Y\rightarrow B; $$ 
when Alice receives a question $ x\in X $, she responds with $ u(x) $, and similarly, Bob responds with $v(y)$ when he receives $y\in Y$.

If the players share a quantum state $ \phi $ on a (perhaps infinite-dimensional) Hilbert space $ \mathcal{H} $, 
and for every question pair $ (x,y)\in X\times Y$, Alice and Bob perform commuting projection-valued measurements (PVMs) 
$$ \left\{E_a^x\in\mathcal{B}(\mathcal{H}):\sum_{a\in A}E_a^x=\bm{1}\right\}  \text{~and~} \left\{F_b^y\in\mathcal{B}(\mathcal{H}):\sum_{b\in B}F_b^y=\bm{1}\right\} $$ respectively to determine their answers,
then the game is said to have a {\it commuting operator strategy}. 
    \begin{equation*}
    \begin{aligned} x\longrightarrow \text{Alice} &\xrightarrow{\{E^{x}_{a_i},~a_i\in A\}}\phi\in \mathcal{H} \longrightarrow a\\
	y\longrightarrow \text{Bob} &\xrightarrow{\{F^{y}_{b_j},~b_j\in B\}}\phi  \in \mathcal{H}\longrightarrow b\end{aligned}
    \end{equation*}
The PVMs satisfy the following relations:
\begin{align*}
    &E_a^xF_b^y-F_b^yE_a^x=0,~\forall (x,y,a,b)\in X\times Y\times A \times B;\\
    &(E_a^x)^2=E_a^x=(E_a^x)^*,~\forall x\in X,a\in A;\\
    &(F_b^y)^2=F_b^y=(F_b^y)^*,~\forall y\in Y,b\in B;\\
    &E_{a_1}^xE_{a_2}^x=0,~\forall x\in X, a_1\neq a_2\in A;\\
    &F_{b_1}^yF_{b_2}^y=0,~\forall y\in Y, b_1\neq b_2\in B;\\
    &\sum_{a\in A}E_a^x=\bm{1},~\forall x\in X;\\
    &\sum_{b\in B}F_b^y=\bm{1},~\forall y\in Y.
\end{align*}
These relations can be abstracted to obtain the universal game algebra for the nonlocal game $\mathcal{G}$ \cite[Section 3]{watts2021noncommutative}.

Furthermore, if we restrict the quantum state $ \phi $ to be a tensor $ \phi_1\otimes\phi_2 $, where $ \phi_1 $ and $ \phi_2 $ are in finite-dimensional Hilbert space $ \mathcal{H}_1 $ and $ \mathcal{H}_2 $ respectively, then we get a (finite-dimensional) {\it quantum strategy}.

 By defining the three types of strategies, we know the classical strategies are contained in the quantum strategies, which are included in the commuting operator strategies. We call a {\it strategy perfect} if the players can always win the game using this strategy. 
 Therefore, a game that admits a perfect classical strategy also has a perfect commuting operator strategy.
However, the converse does not hold. For example, the famous Magic Square game admits a perfect quantum strategy but has no perfect classical strategy \cite{cleve2004consequences}. However, in certain exceptional cases, these strategies may be equivalent. 

For a two-answer game, that is, one whose answer sets are both  $\{0, 1\}$, if it admits a perfect quantum strategy, then Cleve,  Hoyer, Toner, and  Watrous showed that the two-answer game must have a perfect classical strategy \cite[Theorem 3]{cleve2004consequences}. 
We contribute to extending this theorem to the infinite-dimensional case, proving that a two-answer game with a perfect commuting operator strategy also admits a perfect classical strategy. This result, combined with the work of Watts, Helton, and Klep \cite[Theorem 4.3]{watts2021noncommutative}, 
derive a version of the noncommutative Nullstellensatz using a sum of squares (SOS) expression.

\subsection{
Universal Game Algebra for Two-Answer Games}\label{Def}
Let $X,Y,A,B$ be finite sets, where $A=B=\{0,1\}$, and $ \bC\langle \{e_a^x,f_b^y\}\rangle $ be the free algebra generated by $ \{e_a^x,f_b^y:~(x,y,a,b)\in X\times Y\times A\times B\} $.

Define the two-sided ideal
\begin{align*}
\mathcal{I}=\langle&(e_a^x)^2-e_a^x,~(f_b^y)^2-f_b^y;\\
&\sum_{a\in A}e_a^x-1,~\sum_{b\in B}f_b^y-1;\\
&e_a^xf_b^y-f_b^ye_a^x \mid x\in X,y\in Y,a\in A,b\in B\rangle
\end{align*}
and let 
\begin{align}\label{defA}
\mathcal{A}=\bC\langle \{e_a^x,f_b^y\}\rangle/\mathcal{I}.
\end{align}
Since
\begin{equation*}
\begin{aligned}
e_0^xe_1^x&=\frac{1}{2}\Big(\left(e_0^x+e_1^x-1\right)^2-\big((e_0^x)^2-e_0^x\big)\\
&\quad-\big((e_1^x)^2-e_1^x\big)+\left(e_0^x+e_1^x-1\right)\Big),
\end{aligned}
\end{equation*}
we have 
$$ e_0^xe_1^x\in\mathcal{I},~\forall x\in X.$$
Similarly, one can show that 
$$f_0^yf_1^y\in\mathcal{I},~\forall y\in Y.$$ 

The elements in $ \mathcal{I} $ are the relationships the generators satisfy. 
We can also equip $ \mathcal{A} $ with the natural involution $ "*" $ induced by \[(e_a^x)^*=e_a^x, \,\, (f_b^y)^*=f_b^y. \] 
Then $\mathcal{A}$ is a complex $*-$algebra.

The relations in $\mathcal{A}$ are precisely those satisfied by the PVMs of a two-answer game. Thus, this algebra can characterize the commuting operator strategies of a two-answer game. $\mathcal{A}$ serves as the {\it universal game algebra} for two-answer games, as discussed in \cite[Section 3]{watts2021noncommutative}. Furthermore, $ \mathcal{A} $ is a group algebra. 

Let 
\begin{align}\label{AxBy}
     A_x=e_0^x-e_1^x,~B_y=f_0^y-f_1^y 
\end{align}
for any $ x\in X,~y\in Y $, we have 
\begin{align}
&A_x^2=B_y^2=1,A_x=A_x^*,~B_y=B_y^*,\\&e_a^x=\dfrac{1+(-1)^a A_x}{2},~f_b^y=\dfrac{1+(-1)^b B_y}{2}.
\end{align}
Let  $G$ be the group generated by elements $ A_x,~x\in X $ and $ B_y,~y\in Y $. Equip the group algebra of $G$ with the natural involution $*$:
\[g^*=g^{-1}, 
~~(g_1g_2)^*=g_2^*g_1^*,~\forall g,g_1,g_2\in G,\]
Then, we observe that 
 \[\mathcal{A}=\bC[G].\] 

We denote 
\begin{equation*}
\SOS_{\mathcal{A}}:=\left\{\sum_{i=1}^{n}\alpha_i^*\alpha_i\mid n\in\bN,~\alpha_i\in\mathcal{A}\right\}.
\end{equation*}
It is well known that $ \SOS_{\mathcal{A}} $ is Archimedean (see  \cite[example 3]{cimprivc2009representation} or \cite[Remark 4.1]{netzer2013real}), that is,  for every $ \alpha\in\mathcal{A}$, it  can be shown that 
 \[\|a\|_1^2\ -\alpha^*\alpha\in\SOS_{\mathcal{A}},\]
where $\|a\|_1=\sum_{g\in G} |a_g|$. 



We  also need to introduce the concept of
$*-$representation.

\begin{defi}
    A $*$-representation of $\mathcal{A}$ is a  unital $*-$homomorphism
    \begin{equation*}
\sigma:\mathcal{A}\rightarrow\mathcal{B}(\mathcal{H}),
    \end{equation*}
    where $ \mathcal{B}(\mathcal{H}) $ denotes the set of bounded linear operators on a Hilbert space $ \mathcal{H} $ and $ \sigma $ satisfies $ \sigma(u^*)=\sigma(u)^*, \forall u\in \mathcal{A} $.
\end{defi}

\section{Main Results}

Let $X,Y,A,B$ be finite sets, where $A=B=\{0,1\}$, and $ \bC\langle \{e_a^x,f_b^y\}\rangle $ be the free algebra generated by $ \{e_a^x,f_b^y:~(x,y,a,b)\in X\times Y\times A\times B\}$.
Let  $ \mathcal{A} $ be the complex $*-$algebra defined in the previous Subsection \ref{Def}. 
Our main result is stated below: 

\begin{thm}\label{thm1}
	Let $ \mathcal{A} $ denote the universal game algebra for two-answer games.  
	Let
    \begin{equation}\label{Lambda}
         \Lambda\subseteq X\times Y\times A\times B 
    \end{equation}
    be the index set of $\mathcal{N}$, where 
    \begin{equation}\label{N}
    \mathcal{N}=\{e_a^xf_b^a\mid(x,y,a,b)\in\Lambda\}.
    \end{equation}
    Let  $\mathcal{L}(\mathcal{N})$
    be the left ideal  generated by $\mathcal{N}$. Then
	\begin{align*}
	&\qquad-1\notin\operatorname{SOS}_{\mathcal{A}}+\mathcal{L}(\mathcal{N})+\mathcal{L}(\mathcal{N})^*\end{align*}
    if and only if there exists a $*-$representation \[\rho:\mathcal{A}\rightarrow\bC\]
    such that 
    \[\rho(\mathcal{N})=\{0\}.\]
    
\end{thm}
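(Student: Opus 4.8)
The easy direction is immediate: if $\rho:\mathcal{A}\to\bC$ is a $*$-representation with $\rho(\mathcal{N})=\{0\}$, then $\rho(\alpha^*\alpha)=|\rho(\alpha)|^2\geq 0$ shows $\rho$ is nonnegative on $\SOS_{\mathcal{A}}$, while writing a typical element of $\mathcal{L}(\mathcal{N})$ as $\sum_i\beta_i n_i$ with $n_i\in\mathcal{N}$ shows $\rho$ kills $\mathcal{L}(\mathcal{N})$ and $\mathcal{L}(\mathcal{N})^*$; hence $\rho$ is nonnegative on $\SOS_{\mathcal{A}}+\mathcal{L}(\mathcal{N})+\mathcal{L}(\mathcal{N})^*$ and $-1$ cannot lie in it.

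For the converse I would run the standard GNS‑based Nullstellensatz machine and then a combinatorial rounding step. Assume $-1\notin C:=\SOS_{\mathcal{A}}+\mathcal{L}(\mathcal{N})+\mathcal{L}(\mathcal{N})^*$; then $C$ is a convex cone with $1\in C$, and it is Archimedean since it contains the Archimedean cone $\SOS_{\mathcal{A}}$. An Eidelheit/Hahn--Banach separation on the self‑adjoint part of $\mathcal{A}$ in the order‑unit topology then yields, after normalization and $\bC$‑linear extension, a state $\varphi:\mathcal{A}\to\bC$ (unital, $\varphi(\alpha^*\alpha)\geq 0$, $\varphi(\alpha^*)=\overline{\varphi(\alpha)}$) that is nonnegative on $C$. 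Because $\mathcal{L}(\mathcal{N})$ is a $\bC$‑linear subspace and $\mathcal{L}(\mathcal{N})+\mathcal{L}(\mathcal{N})^*\subseteq C$, testing $\varphi$ against $\ell,-\ell,i\ell,-i\ell$ forces $\varphi|_{\mathcal{L}(\mathcal{N})}=0$, so in particular $\varphi(e_a^xf_b^y)=0$ for every $(x,y,a,b)\in\Lambda$. Applying the GNS construction to $\varphi$ produces a Hilbert space $\mathcal{H}$, a unit cyclic vector $\xi$, and a $*$‑representation $\sigma$; the operators $E_a^x=\sigma(e_a^x)$, $F_b^y=\sigma(f_b^y)$ are commuting self‑adjoint projections with $E_0^x+E_1^x=F_0^y+F_1^y=\mathbf{1}$, and since $E_a^xF_b^y$ is then a projection, $\|E_a^xF_b^y\xi\|^2=\varphi(e_a^xf_b^y)=0$ for all $(x,y,a,b)\in\Lambda$.

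The heart of the matter is to round this commuting‑operator data to a classical strategy. I would set $p_a^x=\|E_a^x\xi\|^2$ and $q_b^y=\|F_b^y\xi\|^2$ (so $p_0^x+p_1^x=q_0^y+q_1^y=1$); splitting $E_a^x\xi=E_a^xF_b^y\xi+E_a^x(\mathbf{1}-F_b^y)\xi$ into orthogonal parts gives, for $(x,y,a,b)\in\Lambda$, the inequality $p_a^x\leq q_{1-b}^y$ (and symmetrically $q_b^y\leq p_{1-a}^x$), with the \emph{equality} case forcing $E_a^x\xi=F_{1-b}^y\xi$. Next I would note that a $*$‑character $\rho:\mathcal{A}\to\bC$ is precisely a deterministic classical strategy $(u,v)\in\{0,1\}^X\times\{0,1\}^Y$ (via $\rho(A_x)=(-1)^{u(x)}$, $\rho(B_y)=(-1)^{v(y)}$), that $\rho(\mathcal{N})=\{0\}$ exactly means $(u,v)$ avoids every forbidden pattern $(u(x),v(y))=(a,b)$ with $(x,y,a,b)\in\Lambda$, and that the existence of such $(u,v)$ is the satisfiability of a $2$‑SAT instance whose clauses are these forbidden patterns. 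Suppose for contradiction the instance is unsatisfiable; by the implication‑digraph criterion some literal $\ell$ and its negation $\bar\ell$ lie in one strongly connected component $S$. The crucial observation is that, assigning the value $p_a^x$ to the literal ``$u(x)=a$'' and $q_b^y$ to ``$v(y)=b$'', each implication coming from a clause points from a literal of value at most that of its head — this is exactly the displayed inequality — so all literals of $S$ carry a common value, which must be $\tfrac12$ because $\ell$ and $\bar\ell$ have complementary values. Then every arc inside $S$ joins two value‑$\tfrac12$ literals, the equality case applies arc by arc, and the vector $E_a^x\xi$ (resp. $F_b^y\xi$) attached to a literal is constant along $S$; equating the vectors of $\ell$ and $\bar\ell$ gives $E_a^x\xi=E_{1-a}^x\xi$, whence $2E_a^x\xi=\xi$ and $\|E_a^x\xi\|^2=\tfrac14\neq\tfrac12$, a contradiction. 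Thus the $2$‑SAT instance is satisfiable and the required $\rho$ exists.

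I expect the separation--state--GNS part to be routine, since it is the standard Archimedean Nullstellensatz template and $\SOS_{\mathcal{A}}$ is already known to be Archimedean. The genuine obstacle — and the point where this extends \cite[Theorem~3]{cleve2004consequences} from finite‑dimensional quantum strategies to commuting‑operator strategies — is the rounding: naive majority rounding $u(x)=\text{the majority of }p_a^x$ breaks exactly at the ties $p_a^x=\tfrac12$, and one must show that the tie pattern is governed by the $2$‑SAT implication graph and that a ``bad'' tied cycle is incompatible with $\|E_a^x\xi\|^2=\tfrac12$ via the equality case of the projection inequality. Keeping the two implications attached to each clause matched with the correct inequality (and its tightness condition) is the most delicate bookkeeping in the argument.
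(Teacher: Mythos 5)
Your proposal is correct, and while its outer shell matches the paper's (the evaluation argument for the easy direction; Hahn--Banach separation plus GNS, i.e.\ the paper's Proposition~\ref{p1} quoted from \cite[Theorem 4.3]{watts2021noncommutative}, to get commuting projections $E_a^x,F_b^y$ and a unit cyclic vector with $E_a^xF_b^y\xi=0$ for $(x,y,a,b)\in\Lambda$), the heart of the argument --- the rounding to a one-dimensional representation, which is the paper's Proposition~\ref{p2} --- is done by a genuinely different route. The paper follows \cite[Theorem 3]{cleve2004consequences} closely: it insists that $\mathcal{H}$ be \emph{separable}, fixes an orthogonal basis $\{\psi_1=\psi,\psi_2,\dots\}$, defines $u(x),v(y)$ from the argument of the first nonvanishing coefficient $\psi_{k(x)}^*\sigma(A_x)\psi$ (resp.\ $\psi_{l(y)}^*\sigma(B_y)\psi$), and proves $\psi^*\sigma(e^x_{u(x)}f^y_{v(y)})\psi>0$ via the expansion \eqref{eq1}, the equality case of Cauchy--Schwarz, and Parseval's identity. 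You instead observe that one-dimensional $*$-representations killing $\mathcal{N}$ are exactly satisfying assignments of a 2-SAT instance, invoke the implication-digraph (Aspvall--Plass--Tarjan) unsatisfiability criterion, and kill a hypothetical component containing $\ell$ and $\bar\ell$ by the monotonicity $p_a^x\le q_{1-b}^y$, $q_b^y\le p_{1-a}^x$ (which indeed follows from $E_a^x\xi=E_a^xF_{1-b}^y\xi$ and $E_a^x,F_{1-b}^y$ being commuting projections, with equality forcing $E_a^x\xi=F_{1-b}^y\xi$): all literals in the component then carry value $\tfrac12$, all attached vectors coincide, and $E_a^x\xi=E_{1-a}^x\xi$ forces $\|E_a^x\xi\|^2=\tfrac14$, contradicting $\tfrac12$. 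Your version buys two things: it needs neither separability of $\mathcal{H}$ nor any choice of basis or $\arg$-based tie-breaking, and it exposes the combinatorial core (a 2-SAT instance, hence an explicitly computable classical strategy); the paper's version buys a direct pointwise formula for $(u,v)$ and stays structurally parallel to the original finite-dimensional proof. The steps you leave as routine --- the Eidelheit/Hahn--Banach state construction and the check that a satisfying assignment extends to a $*$-representation of $\mathcal{A}$ --- are exactly the steps the paper also treats briefly, so nothing essential is missing.
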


\begin{remark}\label{rmk1}
    We can interpret the set $\mathcal{N}$ as the invalid determining set of a two-answer game, where the scoring function 
    \begin{equation*}
    \lambda(x,y,a,b)=0 ~~{\text{if}}~~ e_a^xf_b^a\in\mathcal{N}.
    \end{equation*}
See~\cite[Definition 3.4]{watts2021noncommutative}.
\end{remark}

We prove this theorem by the following propositions.

\begin{prop}\label{p1}(\cite[Theorem 4.3]{watts2021noncommutative})
    Let $ \mathcal{A} $ denote  the universal game algebra for two-answer games. If 
	\begin{equation*}
-1\notin\operatorname{SOS}_{\mathcal{A}}+\mathcal{L}(\mathcal{N})+\mathcal{L}(\mathcal{N})^*,
	\end{equation*}
	there exists a $ *-$representation
    \[
\sigma:\mathcal{A}\rightarrow\mathcal{B}(\mathcal{H}) \]
and $ 0\neq\psi\in \mathcal{H}$, where $ \mathcal{H} $ is a separable Hilbert space, such that 
	\begin{equation*}
	\sigma(\alpha)\psi=0
	\end{equation*}
	for all $ \alpha\in\mathcal{L}(\mathcal{N}) $.
\end{prop}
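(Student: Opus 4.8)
The plan is to obtain $\sigma$ and $\psi$ from a GNS construction applied to a state on $\mathcal{A}$ that annihilates $\mathcal{L}(\mathcal{N})+\mathcal{L}(\mathcal{N})^*$. So the first and only substantive step is to produce such a state. Since $\mathcal{L}(\mathcal{N})+\mathcal{L}(\mathcal{N})^*$ is a $*$-closed linear subspace of $\mathcal{A}$, it equals $S\oplus iS$, where $S$ is its self-adjoint part, and it suffices to work in the real vector space $\mathcal{A}_{\mathrm{sa}}$ of self-adjoint elements with the convex cone $M:=\operatorname{SOS}_{\mathcal{A}}+S$. The hypothesis $-1\notin\operatorname{SOS}_{\mathcal{A}}+\mathcal{L}(\mathcal{N})+\mathcal{L}(\mathcal{N})^*$ forces $-1\notin M$. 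Because $\operatorname{SOS}_{\mathcal{A}}$ is Archimedean (noted in Section~\ref{Def}: $\|\alpha\|_1^2-\alpha^*\alpha\in\operatorname{SOS}_{\mathcal{A}}$ for every $\alpha$), the cone $M$ is Archimedean with $1$ as an order unit, so a Hahn--Banach/Eidelheit separation — extending $t\cdot 1\mapsto t$ off $\bR\cdot 1$ and dominating it by the Minkowski gauge $p(a)=\inf\{t:t\cdot 1-a\in M\}$, which is finite and satisfies $p(\pm 1)=\pm 1$ since $-1\notin M$ — yields an $\bR$-linear $\varphi_0\colon\mathcal{A}_{\mathrm{sa}}\to\bR$ with $\varphi_0\geq 0$ on $M$ and $\varphi_0(1)=1$. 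Extending complex-linearly to $\varphi\colon\mathcal{A}\to\bC$ gives a self-adjoint functional with $\varphi(\alpha^*\alpha)\geq 0$ for all $\alpha$, and, since $S$ is a linear subspace of $M$, $\varphi$ vanishes on $S$ and hence on all of $\mathcal{L}(\mathcal{N})+\mathcal{L}(\mathcal{N})^*$.

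Next I would run the GNS construction for $\varphi$. On $\mathcal{A}$ the sesquilinear form $\langle u,v\rangle:=\varphi(v^*u)$ is positive semidefinite, its radical $N_\varphi=\{u\in\mathcal{A}:\varphi(u^*u)=0\}$ is a left ideal by Cauchy--Schwarz, and the completion $\mathcal{H}$ of $\mathcal{A}/N_\varphi$ is a Hilbert space, separable because $\mathcal{A}=\bC[G]$ with $G$ finitely generated, hence countable-dimensional. Left multiplication descends to operators $L_\alpha$ on $\mathcal{A}/N_\varphi$; conjugating the Archimedean bound $\|\alpha\|_1^2-\alpha^*\alpha\in\operatorname{SOS}_{\mathcal{A}}$ by an arbitrary $\beta$ and evaluating through $\varphi$ gives $\varphi(\beta^*\alpha^*\alpha\beta)\leq\|\alpha\|_1^2\,\varphi(\beta^*\beta)$, so $\|L_\alpha\|\leq\|\alpha\|_1$ and $L_\alpha$ extends to $\sigma(\alpha)\in\mathcal{B}(\mathcal{H})$. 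The standard checks show $\sigma$ is a unital $*$-homomorphism. Finally let $\psi$ be the class of $1$: then $\|\psi\|^2=\varphi(1)=1\neq 0$, and for $\alpha\in\mathcal{L}(\mathcal{N})$ we have $\alpha^*\alpha\in\mathcal{L}(\mathcal{N})$ because $\mathcal{L}(\mathcal{N})$ is a left ideal, so $\|\sigma(\alpha)\psi\|^2=\varphi(\alpha^*\alpha)=0$, i.e. $\sigma(\alpha)\psi=0$.

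I expect the only real obstacle to be the first step: checking that $M$ is a genuine Archimedean cone with order unit $1$ and that the separating functional can be normalized on $\bR\cdot 1$ while staying nonnegative on \emph{all} of $M$ — this is exactly where the Archimedean property of $\operatorname{SOS}_{\mathcal{A}}$ is used, and where one must keep in mind that $\mathcal{L}(\mathcal{N})$ is only a one-sided ideal (it is the symmetrization $\mathcal{L}(\mathcal{N})+\mathcal{L}(\mathcal{N})^*$ that is $*$-closed, and a left ideal still satisfies $\alpha^*\alpha\in\mathcal{L}(\mathcal{N})$, which is all the final computation needs). Everything downstream — the GNS machinery, boundedness of $\sigma$ via the Archimedean bound, separability of $\mathcal{H}$, and the concluding estimate — is routine.
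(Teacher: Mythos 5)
Your proposal is correct and follows essentially the same route as the paper: a Hahn--Banach separation argument (made possible by the Archimedeanity of $\operatorname{SOS}_{\mathcal{A}}$) produces a positive functional annihilating $\mathcal{L}(\mathcal{N})+\mathcal{L}(\mathcal{N})^*$, and the GNS construction then yields $\sigma$ and the cyclic vector $\psi$, with separability coming from the countably many (rational-coefficient) words in the finitely many generators and boundedness from the Archimedean bound. Your more explicit order-unit/Minkowski-gauge treatment of the separation, and deriving $\sigma(\alpha)\psi=0$ from $\alpha^*\alpha\in\mathcal{L}(\mathcal{N})$, are just fleshed-out versions of the steps the paper sketches.
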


We emphasize that $ \mathcal{H} $ is a separable Hilbert space, which will be used in the proof of Proposition \ref{p2}. For completeness, we briefly outline the proof given by  Watts, Helton, and Klep in \cite[Theorem 4.3]{watts2021noncommutative}.

\begin{proof}[Proof Sketch]
	By the Hahn-Banach theorem \cite[Theorem III.1.7]{barvinok2002course} and Archimedeanity of $ \SOS_{\mathcal{A}} $, there exists a functional $ f:\mathcal{A}\rightarrow\bC $ which strictly separate $ -1 $ and $ \SOS_{\mathcal{A}}+\mathcal{L}(\mathcal{N})+\mathcal{L}(\mathcal{N})^* $, i.e 
	\begin{equation*}
	f(-1)=-1,~f(\SOS_{\mathcal{A}}+\mathcal{L}(\mathcal{N})+\mathcal{L}(\mathcal{N})^*)\subseteq\bR_{\geq 0}.
	\end{equation*}
    We list the properties of $f$ as follows:
    \begin{itemize}
        \item $f(\mathcal{L}(\mathcal{N}))=\{0\}\text{~and~}f(\SOS_{\mathcal{A}})\subseteq\bR_{\geq 0}.$
        \item  $f(h^*)=f(h)^*$ for every $h\in\mathcal{A}$.
    \end{itemize}
	
	
	Now, the GNS construction yields the desired *-representation $ \sigma $ and a cyclic vector $\psi$. Define the sesquilinear 
	form on $\mathcal{A}$
	\begin{equation*}
	\langle \alpha\mid \beta\rangle=f(\beta^*\alpha),
	\end{equation*}
	and 
    \begin{equation}
    M=\{\alpha\in\mathcal{A}:~f(\alpha^*\alpha)=0\}.
    \end{equation}
    By Cauchy-Schwarz inequality, $M$ is a left ideal of $\mathcal{A}$. 
	Form the quotient space $\widetilde{\mathcal{H}}:=\mathcal{A}/M$, and equip it with the inner product $\langle\cdot\mid\cdot\rangle$. We can complete $ \widetilde{\mathcal{H}} $ to the Hilbert space $\mathcal{H}$. 
	
	It is worth mentioning that we can assume $ \mathcal{H} $  to be a separable Hilbert space, as this assumption holds because $ \mathcal{A} $ 
	has only a finite number of generators, allowing us to generate a countable dense subset of $ \mathcal{A} $ using these generators with rational coefficients. 
    By applying this to the quotient space, we establish the separability of
 $\mathcal{H}$.
	
	Define the quotient map \begin{align*}\phi:\mathcal{A}&\rightarrow\mathcal{H} \\
    \alpha&\mapsto \alpha+M,
    \end{align*}
    the cyclic vector \[\psi:=\phi(1)=1+M,\]
    and the left regular representation 
	\begin{align*}
\sigma:\mathcal{A}&\rightarrow\mathcal{B}(\mathcal{H})\\
\alpha&\mapsto\left(p+M\mapsto \alpha p+M\right).
	\end{align*}
	By Archimedeanity of $ \SOS_{\mathcal{A}} $, it is easy to verify that $ \sigma(\alpha) $ is bounded for every $ \alpha\in\mathcal{A}$, and thus $ \sigma $ is a $*-$representation. Finally, the result
    \[\sigma(\mathcal{L}(\mathcal{N}))\psi=\{0\} \]
    follows from 
    $$ \mathcal{L}(\mathcal{N})^*\mathcal{L}(\mathcal{N})\subseteq\mathcal{L}(\mathcal{N})\subseteq M. $$

\end{proof}

\begin{prop}\label{p2}
    Let $ \mathcal{A} $ denote the universal game algebra for two-answer games. 
    Suppose there exists a $ *-$representation 
    \[\sigma:\mathcal{A}\rightarrow\mathcal{B}(\mathcal{H}),\]
    and $ 0\neq\psi\in \mathcal{H} $, where $ \mathcal{H} $ is a separable Hilbert space, such that 
	\begin{equation*}
	\sigma(\alpha)\psi=0
	\end{equation*}
	for all $ \alpha\in\mathcal{L}(\mathcal{N}) $ ($\mathcal{N}$ is defined in equation (\ref{N})). Then there exists a one-dimensional $*-$representation $\rho:\mathcal{A}\rightarrow\bC$ such that $$ \rho(\mathcal{N})=\{0\}. $$
\end{prop}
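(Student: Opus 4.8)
The plan is to turn the conclusion into a purely combinatorial statement about a $2$-SAT instance and then refute its negation directly from $\sigma,\psi$. Since $\mathcal A=\bC[G]$ and the defining relations among the generators are only $A_x^2=B_y^2=1$ and $[A_x,B_y]=0$, the abelianization of $G$ is $\bZ_2^{X}\times\bZ_2^{Y}$, so one-dimensional $*$-representations of $\mathcal A$ are in bijection with sign patterns: $\epsilon_x:=\rho(A_x)\in\{\pm1\}$ for $x\in X$ and $\delta_y:=\rho(B_y)\in\{\pm1\}$ for $y\in Y$. Under this bijection $\rho(e_a^x)=\tfrac{1+(-1)^a\epsilon_x}{2}\in\{0,1\}$, hence $\rho(\mathcal N)=\{0\}$ says exactly that for each index $(x,y,a,b)$ of $\mathcal N$ one does \emph{not} have simultaneously $\epsilon_x=(-1)^a$ and $\delta_y=(-1)^b$. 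Thus the desired $\rho$ exists iff the $2$-CNF formula $\Phi$ with variables $\{\epsilon_x\}\cup\{\delta_y\}$ and one clause $\big(\epsilon_x\ne(-1)^a\big)\vee\big(\delta_y\ne(-1)^b\big)$ per element of $\mathcal N$ is satisfiable --- equivalently, iff the two-answer game admits a perfect deterministic classical strategy.

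So I would argue by contradiction: suppose $\Phi$ is unsatisfiable. By the implication-digraph criterion for $2$-SAT (Aspvall--Plass--Tarjan), some variable has its two literals $z,\bar z$ in one strongly connected component, so concatenating a directed path $z\rightsquigarrow\bar z$ with one $\bar z\rightsquigarrow z$ gives a closed walk of literals $\ell_0\to\ell_1\to\cdots\to\ell_N=\ell_0$ that visits both $z$ and $\bar z$. Every clause of $\Phi$ couples one $\epsilon$-literal to one $\delta$-literal, so every edge of the implication digraph joins an $\epsilon$-literal to a $\delta$-literal, the walk alternates type, and therefore the projections $\Pi_{(\epsilon_x=c)}:=\sigma\!\big(\tfrac{1+cA_x}{2}\big)$ and $\Pi_{(\delta_y=d)}:=\sigma\!\big(\tfrac{1+dB_y}{2}\big)$ attached to two consecutive literals commute (this is precisely the relation $e_a^xf_b^y=f_b^ye_a^x$). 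Each edge $\ell_i\to\ell_{i+1}$ is witnessed by some $(x,y,a,b)$ indexing $\mathcal N$, and unwinding $\sigma(e_a^xf_b^y)\psi=0$ --- using $e_0^x+e_1^x=1$ and $f_0^y+f_1^y=1$ --- shows that it encodes exactly the identity $\Pi_{\ell_i}\psi\in\operatorname{ran}\Pi_{\ell_{i+1}}$, i.e.\ $\Pi_{\ell_{i+1}}\Pi_{\ell_i}\psi=\Pi_{\ell_i}\psi$; by the commutativity this rewrites as $v_i=\Pi_{\ell_i}v_{i+1}$ for $v_i:=\Pi_{\ell_i}\psi$. Going once around the cycle yields $v_0=\big(\Pi_{\ell_0}\Pi_{\ell_1}\cdots\Pi_{\ell_{N-1}}\big)v_0$, and since a finite product of orthogonal projections fixing a vector must have every factor fix it, $\Pi_{\ell_i}v_0=v_0$ for all $i$; rotating the closed walk gives the same for every $v_j$. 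Taking $j$ with $\ell_j=z$ and $i$ with $\ell_i=\bar z$ gives $\Pi_{\bar z}(\Pi_z\psi)=\Pi_z\psi$, and since $\Pi_z\Pi_{\bar z}=0$ we get $\Pi_z\psi=0$; symmetrically $\Pi_{\bar z}\psi=0$; hence $\psi=\Pi_z\psi+\Pi_{\bar z}\psi=0$, contradicting $\psi\ne0$. So $\Phi$ is satisfiable and $\rho$ exists.

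I expect the main point requiring care to be the translation of implication edges into operator identities in the second step: one must verify that, no matter which literal of a clause is taken as the tail of the edge, the single hypothesis $\sigma(e_a^xf_b^y)\psi=0$ yields precisely $\Pi_{\ell_i}\psi\in\operatorname{ran}\Pi_{\ell_{i+1}}$ (passing between a projection and its complement), and that consecutive projections truly commute. This commutativity is exactly what makes the product-of-projections argument work; a naive ``follow the implications'' chase fails because $\sigma(e_a^x)$ and $\sigma(e_{a'}^{x'})$ need not commute for $x\ne x'$. (This route does not actually seem to use separability of $\mathcal H$, although the statement invites it; a separability-based alternative would be to disintegrate $\sigma$ over a maximal abelian subalgebra of $\sigma(\mathcal A)'$, reduce to an irreducible fibre on which $\psi$ survives, and run the argument there.)
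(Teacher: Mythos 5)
Your proof is correct, and it takes a genuinely different route from the paper's. You recast the existence of the one-dimensional $*$-representation as satisfiability of a 2-CNF formula with one clause per element of $\mathcal{N}$ (this translation is right: one-dimensional $*$-representations are exactly sign patterns $\rho(A_x),\rho(B_y)\in\{\pm1\}$, and $\rho(e_a^xf_b^y)=0$ is exactly the clause you write), and you refute unsatisfiability via the Aspvall--Plass--Tarjan criterion: a variable with both literals in one strongly connected component gives a closed walk in the implication digraph, each edge of which --- being witnessed by some $e_a^xf_b^y\in\mathcal{N}$ with $\sigma(e_a^xf_b^y)\psi=0$ --- yields the identity $\Pi_{\ell_{i+1}}\Pi_{\ell_i}\psi=\Pi_{\ell_i}\psi$; since consecutive literals are always of Alice/Bob type, the two projections commute, the identities telescope around the cycle, and the standard fact that a product of orthogonal projections fixing a vector forces every factor to fix it gives $\Pi_z\psi=\Pi_{\bar z}\psi=0$, hence $\psi=0$, a contradiction. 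I checked the two points you flagged: the edge-to-identity translation works in both orientations (the complementary projection enters exactly as you say, via $e_0^x+e_1^x=1$, $f_0^y+f_1^y=1$), and the projection-product fact needs no commutativity beyond the consecutive pairs. The paper argues quite differently, extending the finite-dimensional argument of Cleve, H\o yer, Toner and Watrous: it uses separability to fix an orthonormal basis $\{\psi_j\}$ with $\psi_1=\psi$, defines $u(x),v(y)$ from the phase of the first nonzero coefficient of $\sigma(A_x)\psi$ and $\sigma(B_y)\psi$, and shows via the expansion (\ref{eq1}), the equality case of Cauchy--Schwarz and Parseval that $\psi^*\sigma(e_{u(x)}^xf_{v(y)}^y)\psi>0$, so the induced deterministic strategy lands in $\Pi$ and avoids $\Lambda$. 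Your route avoids separability entirely (so that hypothesis becomes superfluous, as you observe) and all basis/phase bookkeeping, and it isolates the combinatorial core --- perfect classical strategies of two-answer games are exactly 2-SAT solutions --- at the cost of invoking the 2-SAT/SCC theorem as a black box and proceeding by contradiction; the paper's proof is more hands-on, explicitly constructing the classical strategy $(u,v)$ from the commuting-operator strategy with only elementary Hilbert-space tools, and stays closer to the quantum-information literature it generalizes.
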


\begin{remark}
 The proof below extends the argument in 
 \cite[Theorem 4]{cleve2004consequences}, which was originally stated for the tensor product of two finite-dimensional Hilbert spaces, to the more general setting of infinite-dimensional Hilbert spaces.
 In fact, the condition in Proposition \ref{p2} implies that the tuple 
 $$(\mathcal{H},\{\sigma(e_a^x)\},\{\sigma(f_b^y)\},\psi)$$ defines a perfect commuting operator strategy for the two-answer game with an invalid determining set $\mathcal{N}$. Furthermore,   the conclusion of Proposition \ref{p2} demonstrates  that the mappings 
 induced by $\rho$: 
 \[x\mapsto a~(\text{satisfying~}\rho(e_a^x)=1),  y \mapsto b~(\text{satisfying~}\rho(f_b^y)=1)\]
 for $x \in X, y \in Y$,  provide a perfect deterministic strategy for this game. Therefore, this proposition also implies that a two-answer game with a perfect commuting operator strategy also admits a perfect classical strategy.
 \end{remark}

\begin{proof}
	We construct the one-dimensional representation $ \rho $ as follows. Since 
	\begin{equation*}
	\sum_{a\in A}\sum_{b\in B}\psi^*\sigma(e_a^x f_b^y)\psi=1
	\end{equation*}
	for every fixed pair $ (x,y) $, we know that there exist $ (x,y,a,b)\in X\times Y\times A\times B $ such that 
    \[\psi^*\sigma(e_a^x f_b^y)\psi\neq 0 .\]
    Let 
	\begin{equation}\label{SetPi}
	\Pi=\{(x,y,a,b)\in X\times Y\times A\times B:~\psi^*\sigma(e_a^x f_b^y)\psi\neq 0\},
	\end{equation}
	 we have 
     \[\Pi\subseteq X\times Y\times A\times B\setminus\Lambda \]
     since 
     \[\sigma(\mathcal{L}(\mathcal{N}))\psi=\{0\},\]
     and thus 
     \[\psi^*\sigma(e_a^x f_b^y)\psi=0\]
     for any $ (x,y,a,b)\in\Lambda $, where  $\Lambda$ is the index of the invalid determining set $\mathcal{N}$ (\ref{N}), see Remark \ref{rmk1}. 
	
	Using the generators $ A_x $ and $ B_y $ defined in  (\ref{AxBy}), we can rewrite:
	\begin{equation}\label{eq1}
	\begin{aligned}
	\psi^*\sigma(e_a^x f_b^y)\psi&=\dfrac{1}{4}\\
	&+\dfrac{1}{4}(-1)^a\psi^*\sigma(A_x)\psi\\
	&+\dfrac{1}{4}(-1)^b\psi^*\sigma(B_y)\psi\\
	&+\dfrac{1}{4}(-1)^{a+b}\psi^*\sigma(A_xB_y)\psi.
	\end{aligned}
	\end{equation}
	
	Since $ \mathcal{H} $ is separable, we can choose an orthogonal basis of $ \mathcal{H} $ named
	\begin{equation*}
	\{\psi_1,\psi_2,\ldots\},
	\end{equation*}
	where $ \psi_1=\psi $. Define
	\begin{align*}
	k:X&\rightarrow\bN\\x&\mapsto\min\{j\in\bN:~\psi_j^*\sigma(A_x)\psi\neq 0\};\\
	l:Y&\rightarrow\bN\\y&\mapsto\min\{j\in\bN:~\psi_j^*\sigma(B_y)\psi\neq 0\}.
	\end{align*}
    Unlike the finite-dimensional case considered in the proof of \cite[Theorem 4]{cleve2004consequences}, here we need to show that 
 for every $ x\in X $, $y \in Y $,  $ k(x)$ and $l(y)$ are well-defined. 
 
 As $ \psi\neq 0 $ and $ \sigma(A_x)^2=1$,  there must exist a $ j\in \bN $ such that $ \psi_j^*\sigma(A_x)\psi\neq 0 $ (otherwise, $ \sigma(A_x)\psi=0$, which contradicts the assumption that  $ \psi\neq 0 $ and $ \sigma(A_x)^2=1$).  
    Similarly, we can also prove that $ l(y) $ is well-defined. 
	
	Let 
	\begin{align}
	u:X&\rightarrow A\nonumber\\x&\mapsto\left\{
	\begin{aligned}
	&0,~0\leq\arg \psi_{k(x)}\sigma(A_x)\psi<\pi;\\&1,~\pi\leq\arg \psi_{k(x)}\sigma(A_x)\psi<2\pi.
	\end{aligned}\right.\label{u}\\
    v:Y&\rightarrow B\nonumber\\y&\mapsto\left\{
	\begin{aligned}
	&0,~0\leq\arg \psi_{l(y)}\sigma(B_y)\psi<\pi;\\&1,~\pi\leq\arg \psi_{l(y)}\sigma(B_y)\psi<2\pi.
	\end{aligned}\right.\label{v}
    \end{align}
	
	We have the following claim:
	\begin{claim}\label{c1}
		For every $ (x,y,u(x),v(y))\in X\times Y\times A\times B $, we have 
        \[(x,y,u(x),v(y))\in \Pi, \]
        where $\Pi$ is defined in (\ref{SetPi}). 
        In particular, this implies
        \[\psi^*\sigma(e_{u(x)}^{x} f_{v(y)}^{y})\psi\neq 0.\]
	\end{claim}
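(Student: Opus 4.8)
The plan is to reduce Claim~\ref{c1} to a single real inequality and to establish that inequality by a short case analysis. Abbreviating $\langle T\rangle:=\psi^*\sigma(T)\psi$, the expansion~(\ref{eq1}) gives
\[
\psi^*\sigma(e_{u(x)}^x f_{v(y)}^y)\psi=\frac{1}{4}\,D,\qquad D:=1+(-1)^{u(x)}\langle A_x\rangle+(-1)^{v(y)}\langle B_y\rangle+(-1)^{u(x)+v(y)}\langle A_xB_y\rangle ,
\]
and, since by~(\ref{SetPi}) the membership $(x,y,u(x),v(y))\in\Pi$ is by definition the inequality $\psi^*\sigma(e_{u(x)}^x f_{v(y)}^y)\psi\neq 0$, Claim~\ref{c1} is exactly the statement $D>0$. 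I would first record that $\langle A_x\rangle,\langle B_y\rangle,\langle A_xB_y\rangle$ are real numbers lying in $[-1,1]$, which follows from $\sigma(A_x),\sigma(B_y),\sigma(A_xB_y)$ being self-adjoint with square $\bm{1}$ together with the Cauchy--Schwarz inequality and $\norm{\psi}=1$.

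The first substantive observation is that the sign conventions built into~(\ref{u}) and~(\ref{v}) are exactly what is needed for $(-1)^{u(x)}\langle A_x\rangle=\abs{\langle A_x\rangle}$ and $(-1)^{v(y)}\langle B_y\rangle=\abs{\langle B_y\rangle}$. Indeed, if $\langle A_x\rangle\neq 0$ then $\psi_1^*\sigma(A_x)\psi=\langle A_x\rangle\neq 0$, so $k(x)=1$, and~(\ref{u}) then picks $u(x)$ according to the sign of the real number $\langle A_x\rangle$, which is precisely the rule forcing $(-1)^{u(x)}\langle A_x\rangle=\abs{\langle A_x\rangle}$; if $\langle A_x\rangle=0$ the identity is trivial, and the same reasoning applies to $v(y)$. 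With this in hand,
\[
D=1+\abs{\langle A_x\rangle}+\abs{\langle B_y\rangle}+(-1)^{u(x)+v(y)}\langle A_xB_y\rangle\ \geq\ 1-\abs{\langle A_xB_y\rangle}\ \geq\ 0 ,
\]
so $D>0$ can fail only when $\abs{\langle A_xB_y\rangle}=1$ and, simultaneously, $(-1)^{u(x)+v(y)}\langle A_xB_y\rangle=-1$.

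The hard part --- the only step requiring more than bookkeeping --- is to exclude this extremal configuration. When $\abs{\langle A_xB_y\rangle}=1$, the operator $\sigma(A_xB_y)$ is a self-adjoint unitary with $\abs{\psi^*\sigma(A_xB_y)\psi}=\norm{\psi}^2$, so the equality case of Cauchy--Schwarz forces $\sigma(A_xB_y)\psi=\varepsilon\,\psi$ with $\varepsilon:=\langle A_xB_y\rangle$, and since this value is real of modulus $1$ we have $\varepsilon\in\{-1,1\}$; multiplying by $\sigma(A_x)$ and using $\sigma(A_x)^2=\bm{1}$ gives $\sigma(B_y)\psi=\varepsilon\,\sigma(A_x)\psi$. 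Applying $\psi_j^*$ to both sides shows that the coordinate sequences of $\sigma(B_y)\psi$ and of $\sigma(A_x)\psi$ in the fixed orthogonal basis $\{\psi_j\}$ are proportional with ratio $\varepsilon\neq 0$, so their first nonzero coordinates occur at one and the same index $k(x)=l(y)=:m$, with $\psi_m^*\sigma(B_y)\psi=\varepsilon\,\psi_m^*\sigma(A_x)\psi$. Now the point of the $\arg(\cdot)$ conditions in~(\ref{u}) and~(\ref{v}) is exactly that $(-1)^{u(x)}\psi_m^*\sigma(A_x)\psi$ and $(-1)^{v(y)}\psi_m^*\sigma(B_y)\psi$ both have argument in $[0,\pi)$; since the latter equals $(-1)^{u(x)+v(y)}\varepsilon$ times the former, and two nonzero complex numbers whose arguments lie in $[0,\pi)$ cannot be negatives of one another, we conclude $(-1)^{u(x)+v(y)}\varepsilon=1$, i.e. $(-1)^{u(x)+v(y)}\langle A_xB_y\rangle=1$. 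This contradicts the failure configuration found above, so $D>0$ holds unconditionally and the claim follows. I expect the infinite-dimensional setting to cause no real difficulty here beyond guaranteeing that $m=k(x)=l(y)$ is a well-defined natural number, which has already been secured using the separability of $\mathcal{H}$ and $\sigma(A_x)^2=\bm{1}$.
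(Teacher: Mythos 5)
Your proof is correct and follows essentially the same route as the paper's: expand the correlation via (\ref{eq1}), use the sign conventions in (\ref{u})--(\ref{v}) to make the $\psi^*\sigma(A_x)\psi$ and $\psi^*\sigma(B_y)\psi$ terms nonnegative, and exclude the extremal case $(-1)^{u(x)+v(y)}\psi^*\sigma(A_xB_y)\psi=-1$ via the equality case of Cauchy--Schwarz together with the argument-in-$[0,\pi)$ property at the minimal basis index $\min\{k(x),l(y)\}$. The only differences are cosmetic: you replace the paper's two-case split by the single bound $D\geq 1-\abs{\psi^*\sigma(A_xB_y)\psi}$, apply Cauchy--Schwarz to $\langle\sigma(A_xB_y)\psi,\psi\rangle$ (obtaining $\sigma(B_y)\psi=\varepsilon\,\sigma(A_x)\psi$ after multiplying by $\sigma(A_x)$) rather than to the pairing bounded by $\norm{\sigma(A_x)\psi}\cdot\norm{\sigma(B_y)\psi}$, and pair with $\psi_j$ directly instead of invoking Parseval's identity.
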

	
	We will provide the proof of Claim \ref{c1} after completing the proof of Proposition \ref{p2}. 
    
    We construct the one-dimensional $*-$representation $\rho $ as follows. For every $ x\in X $, define
	\begin{equation*}
	\rho(e_{u(x)}^{x})=1,~\rho(e_{1-u(x)}^{x})=0;
	\end{equation*}
	and for every $ y\in Y $, define
	\begin{equation*}
	\rho(f_{v(y)}^{y})=1,~\rho(f_{1-v(y)}^{y})=0.
	\end{equation*}
	Then, by linearity and homogeneity, we extend $ \rho $ to the entire game algebra  $ \mathcal{A}$.
    
    Since $\rho(e_a^x)$ and $\rho(f_b^y)$ are either $0$ or $1$, they are naturally commutative.  It is straightforward to check that  $\rho$ satisfies: 
   
    \[\rho(e_a^x)^2=\rho(e_a^x),~ \rho(f_b^y)^2=\rho(f_b^y),\]
    and
    \[\rho(e_0^x)+\rho(e_1^x)=1,~\rho(f_0^y)+\rho(f_1^y)=1,\]
     for all $e_a^x$, $f_b^y\in \mathcal{A}$, i.e.,  $ \rho(e_a^x) $ and $ \rho(f_b^y) $ satisfy the same relations as $e_a^x$ and $f_b^y$ in $ \mathcal{A}$. 
    Therefore, $ \rho $ is indeed a $*-$representation of $\mathcal{A}$.
    
    Since
	\begin{equation*}
	\rho(e_a^xf_b^y)=1\iff \left(a=u(x)\right)\wedge \left(b=v(y)\right).
	\end{equation*}
	 By Claim \ref{c1}, we have 
     \begin{equation}\label{pi1}
     \rho(e_a^xf_b^y)=1\Longrightarrow (x,y,a,b)\in \Pi.
     \end{equation}
    Since the value of $\rho(e_a^xf_b^y)$ can only be  $1$ or $0$,  as \[ \Pi\cap\Lambda=\emptyset,\]
   the condition (\ref{pi1}) implies that for 
    every $ (x,y,a,b)\in\Lambda$, i.e., for every 
	$ e_a^xf_b^y\in\mathcal{N}$, 
    \[\rho(e_a^xf_b^y)=0\]
    holds, which completes the proof.
\end{proof}
\begin{remark}
   For quantum nonlocal games, the value $\psi^*\sigma(e_a^xf_b^y)\psi$ is the probability that the players provide the answer pair $(a,b)$ for the question pair $(x,y)$. Since we start with a perfect commuting operator strategy, $\psi^*\sigma(e_a^xf_b^y)\psi\neq 0$ implies that the scoring function $\lambda(x,y,a,b)=1$. In other words, Claim \ref{c1} indicates that $$\lambda(x,y,u(x),v(y))=1.$$ 
   That is, the mappings $u: X\rightarrow A$ and $v: Y\rightarrow B$ defined in equations (\ref{u}) and (\ref{v}) actually define a perfect deterministic classic strategy for the two-answer game.
\end{remark}
Now, we provide the proof of Claim \ref{c1}.
\begin{proof}[Proof of Claim \ref{c1}]
	We set $ a=u(x) $ and $ b=v(y) $ in equations (\ref{eq1}), and compute 
	\begin{equation}\label{eq2}
	\begin{aligned}
	\psi^*\sigma(e_{u(x)}^{x} f_{v(y)}^{y})\psi&=\dfrac{1}{4}\\
	&+\dfrac{1}{4}(-1)^{u(x)}\psi^*\sigma(A_x)\psi\\
	&+\dfrac{1}{4}(-1)^{v(y)}\psi^*\sigma(B_y)\psi\\
	&+\dfrac{1}{4}(-1)^{u(x)+v(y)}\psi^*\sigma(A_xB_y)\psi.
	\end{aligned}
	\end{equation}
	Notice that $ \sigma(A_x) $ and $ \sigma(B_y) $ are commutative self-adjoint operators,  so $ \psi^*\sigma(A_x)\psi,~\psi^*\sigma(B_y)\psi $ and $ \psi^*\sigma(A_xB_y)\psi $ are all real numbers.
	
	If $ \psi^*\sigma(A_x)\psi\neq 0 $, and since $ \psi_1=\psi$,  we conclude that 
    $ k(x)=1 $. 
    Moreover, according to (\ref{u}), 
    if $\psi^*\sigma(A_x)\psi>0$, we have 
    \[u(x)=0,~(-1)^{u(x)}=1;\]
    if $\psi^*\sigma(A_x)\psi<0$, we have 
    \[u(x)=1,~(-1)^{u(x)}=-1.\]
    Therefore, the value below is always positive:
	$$ (-1)^{u(x)}\psi^*\sigma(A_x)\psi>0. $$ 
    Similarly, if $ \psi^*\sigma(B_y)\psi\neq 0 $, we have $$ (-1)^{v(y)}\psi^*\sigma(B_y)\psi>0. $$ 
	Therefore, if either $ \psi^*\sigma(A_x)\psi $ or $ \psi^*\sigma(B_y)\psi $ is nonzero,  we have 
	\begin{equation*}
	\dfrac{1}{4}(-1)^{u(x)}\psi^*\sigma(A_x)\psi+\dfrac{1}{4}(-1)^{v(y)}\psi^*\sigma(B_y)\psi>0.
	\end{equation*}
	Since $ \dfrac{1}{4}+\dfrac{1}{4}(-1)^{u(x)+v(y)}\psi^*\sigma(A_xB_y)\psi\geq 0 $, we have $$ \psi^*\sigma(e_{u(x)}^{x} f_{v(y)}^{y})\psi>0. $$
	
Hence,  we only need to consider the case $$ \psi^*\sigma(A_x)\psi=\psi^*\sigma(B_y)\psi=0. $$ 
Since we are working with infinite-dimensional separable Hilbert spaces, we modify the argument in \cite[Theorem 4]{cleve2004consequences} by incorporating Cauchy-Schwarz inequality and Parseval's identity for proving  that  
    \[\dfrac{1}{4}+\dfrac{1}{4}(-1)^{u(x)+v(y)}\psi^*\sigma(A_xB_y)\psi> 0.\]
	Conversely, suppose 
    \begin{equation}\label{negative}
        (-1)^{u(x)+v(y)}\psi^*\sigma(A_xB_y)\psi=-1
    \end{equation}
    holds. 
	By Cauchy-Schwarz's inequality, we know that 
	\begin{equation*}
	\begin{aligned}
	&\quad\left|(-1)^{u(x)+v(y)}\psi^*\sigma(A_xB_y)\psi\right|\\&\leq\|(-1)^{u(x)}\sigma(A_x)\psi\|\cdot\|(-1)^{v(y)}\sigma(B_y)\psi\|.
	\end{aligned}
	\end{equation*}
	Since $ \psi $ is a unit vector and the eigenvalues of $ \sigma(A_x),\sigma(B_y) $ can only be $ \pm 1 $, we know $$ \|(-1)^{u(x)}\sigma(A_x)\psi\|=1 ~\text{and}~ \|(-1)^{v(y)}\sigma(B_y)\psi\|=1. $$
    Applying the equality condition of the Cauchy-Schwarz inequality, and the assumption  (\ref{negative}), we obtain
	\begin{equation}\label{eq3}
	(-1)^{u(x)}\sigma(A_x)\psi=-(-1)^{v(y)}\sigma(B_y)\psi.
	\end{equation}

	By Parseval's identity, we have
	\begin{equation*}
(-1)^{u(x)}\sigma(A_x)\psi=\sum_{j=1}^{\infty}(-1)^{u(x)}\langle\sigma(A_x)\psi,\psi_j\rangle\cdot\psi_j,
	\end{equation*}
	and
	\begin{equation*}
(-1)^{v(y)}\sigma(B_y)\psi=\sum_{j=1}^{\infty}(-1)^{v(y)}\langle\sigma(B_y)\psi,\psi_j\rangle\cdot\psi_j,
	\end{equation*}
	Then equation (\ref{eq3}) gives  
	\begin{equation*}
(-1)^{u(x)}\langle\sigma(A_x)\psi,\psi_j\rangle=-(-1)^{v(y)}\langle\sigma(B_y)\psi,\psi_j\rangle,
	\end{equation*}
	which implies that 
	\begin{equation}\label{eq4}
	(-1)^{u(x)}\psi_j^*\sigma(A_x)\psi=-(-1)^{v(y)}\psi_j^*\sigma(B_y)\psi
	\end{equation}
	holds for every $ j\in\{1,2,\ldots\ldots\} $. 
	However, equation (\ref{eq4}) must fail to hold for $ j=\min\{k(x),l(y)\} $.  It is clear that equation (\ref{eq4}) fails when $ k(x)\neq l(y)$.  Assume  $ k(x)=l(y)=j$, 
	we find that the arguments of 
 $ \arg\left((-1)^{u(x)}\psi_j^*\sigma(A_x)\psi\right) $ and 
	$ \arg\left((-1)^{v(y)}\psi_j^*\sigma(B_y)\psi\right) $ both lie in the range  $ \left[0,\pi\right) $, which contradicts equation   (\ref{eq4}) once again!
	
	Therefore, when $ \psi^*\sigma(A_x)\psi=\psi^*\sigma(B_y)\psi=0$,  we have shown that \[\dfrac{1}{4}+\dfrac{1}{4}(-1)^{u(x)+v(y)}\psi^*\sigma(A_xB_y)\psi> 0.\]
	That is, $$ \psi^*\sigma(e_{u(x)}^{x} f_{v(y)}^{y})\psi>0, $$ which always holds, thereby proving the claim. 
\end{proof}

Finally, we prove Theorem \ref{thm1}.
\begin{proof}[Proof of Theorem \ref{thm1}]
	
    ($\Longleftarrow$) This direction is straightforward. Suppose, for the sake of contradiction, that this direction does not hold, i.e., \[-1\in\operatorname{SOS}+\mathcal{L}(\mathcal{N})+\mathcal{L}(\mathcal{N})^*\]
    and there exists a $*-$representation $ \rho $ 
	such that 
    \[\rho(\mathcal{N})=\{0\},\] then we have
	\begin{equation*}
	-1=\rho(-1)\in\rho(\operatorname{SOS}_{\mathcal{A}})\geq 0,
	\end{equation*}
	which is a contradiction!
	
	($\Longrightarrow$) This follows  from  Proposition \ref{p1} and Proposition \ref{p2}.
\end{proof}

\section{Some Discussions}
Here are some remarks and discussions about our results.
\begin{remark}
    Watts, Helton, and Klep demonstrated that for a torically determined game, the question of whether the game has a perfect commuting operator strategy can be translated into a subgroup membership problem \cite[Section 5]{watts2021noncommutative}. However, this result cannot be used to prove our theorem. The reason is that if we regard  $\mathcal{N}$ as the determining set of the game, the elements in $\mathcal{N}$ may not necessarily be expressible in the form $\beta g-1,\beta\in\bC,g\in G$. In other words, a two-answer game is not necessarily a torically determined game. 
\end{remark}
\begin{remark}
    Suppose the answer set $ A $ or $ B $ contains three or more elements. In that case, it is well known that our main result (Theorem \ref{thm1}) fails to hold, as there exists a nonlocal game that has a perfect commuting operator strategy but no perfect classical strategies\cite{cleve2004consequences,slofstra2019set}. From another perspective, equation (\ref{eq1}) no longer holds in this case, which prevents us from reaching a similar conclusion.
\end{remark}
\begin{remark}
    The algebra $\mathcal{A}$ is finitely generated, and the set $\mathcal{N}$ is also finite. However, the proof of our theorem relies on infinite-dimensional space.  It is currently unclear whether the proof can be simplified to avoid the use of infinite-dimensional spaces. 
\end{remark}

\begin{acks}
The authors would like to thank Sizhuo Yan, Jianting Yang, and Yuming Zhao for their helpful discussions and suggestions. 
\end{acks}
\bibliographystyle{ACM-Reference-Format}
\bibliography{LYZ2024ref.bib}

\end{document}